\documentclass[10pt, utf8, UTF8]{wlscirep}
\usepackage[fontsize=8pt]{fontsize}
\usepackage[colaction]{multicol}
\usepackage{xcolor}
\usepackage{amsmath}
\usepackage{amsthm}
\usepackage{mathrsfs}
\usepackage{float}
\usepackage{enumitem}
\usepackage{caption}
\usepackage{hyperref}
\usepackage{lineno}
\usepackage{soul}

\newcommand{\f}{\frac}
\newcommand{\p}{\partial}
\newcommand{\B}{\mathbf}

\newtheorem{theorem}{Theorem}
\newtheorem{corollary}{Corollary}

\title{A Unified Data-Driven Framework for Efficient Scientific Discovery}
\author[1]{Tingxiong Xiao}
\author[1]{Xinxin Song}
\author[1]{Ziqian Wang}
\author[1]{Boyang Zhang}
\author[1,2,3,*]{Jinli Suo}

\affil[1]{Department of Automation, Tsinghua University, Beijing, China}
\affil[2]{Institute of Brain and Cognitive Sciences, Tsinghua University, Beijing, China}
\affil[3]{Shanghai Artificial Intelligence Laboratory, Shanghai, China}
\affil[*]{jlsuo@tsinghua.edu.cn}


\begin{abstract}
\textbf{
Scientific discovery drives progress across disciplines, from fundamental physics to industrial applications. However, identifying physical laws automatically from gathered datasets requires identifying the structure and parameters of the formula underlying the data, which involves navigating a vast search space and consuming substantial computational resources.
To address these issues, we build on the Buckingham $\Pi$ theorem and Taylor's theorem to create a unified representation of diverse formulas, which introduces latent variables to form a two-stage structure. To minimize the search space, we initially focus on determining the structure of the latent formula, including the relevant contributing inputs, the count of latent variables, and their interconnections. Following this, the process of parameter identification is expedited by enforcing dimensional constraints for physical relevance, favoring simplicity in the formulas, and employing strategic optimization techniques. Any overly complex outcomes are refined using symbolic regression for a compact form.
These general strategic techniques drastically reduce search iterations from hundreds of millions
to just tens, significantly enhancing the efficiency of data-driven formula discovery. We performed comprehensive validation to demonstrate FIND's effectiveness in discovering physical laws, dimensionless numbers, partial differential equations, and uniform critical system parameters across various fields, including astronomy, physics, chemistry, and electronics. The excellent performances across 11 distinct datasets position FIND as a powerful and versatile tool for advancing data-driven scientific discovery in multiple domains.
}
\end{abstract}
\begin{document}
\captionsetup[figure]{labelfont={bf},name={Fig.},labelsep=period}
\flushbottom
\maketitle
\thispagestyle{empty}

\section*{Introduction}
\begin{multicols}{2}  
Scientific discovery \cite{wang2023scientific,roscher2020explainable} forms the foundation of human understanding of the world and drives technological advancement, propelling developments from ancient astronomical observations\cite{ball2010data,kremer2017big,sen2022astronomical} to the modern theories of quantum mechanics\cite{han2021machine,rupp2015machine}. 
Traditional scientific discovery typically begins from formulating a theory, followed by experimental verification on collected data. 
Many classical laws were initially proposed based on theoretical derivations and subsequently confirmed through experimental observations \cite{born1962einstein}. 
However, with advancements in computer science, machine learning methods are now being utilized to support data-driven scientific discovery \cite{cornelio2023combining}. 
In this new paradigm, researchers apply machine learning algorithms to extract knowledge from a set of data \cite{lu2022discovering,rao2023encoding,tian2024machine,gao2024learning}, and then test the derived formulas to determine their accuracy \cite{chen2021physics,xie2022data,cory2024evolving}, or to provide insights that guide further exploration and research \cite{lauritsen2020explainable}.

Symbolic regression (SR) is a data-driven modeling approach that automatically uncovers the potential mathematical relationship between variables \cite{schmidt2009distilling,udrescu2020ai0,udrescu2020ai,tenachi2023deep,hernandez2019fast,kamienny2022end,cranmer2020discovering}. 
Typically, symbolic regression optimizes both the model structure and its parameters simultaneously, resulting in analytical formulas that describe complex functional relationships. 
The key implementation of symbolic regression is the searching algorithms, such as genetic programming \cite{koza1994genetic}, which iteratively evolve and optimize candidate solutions based on a fitness function. 
Due to its flexibility, independent of predefined structures, and high interpretability, symbolic regression is widely used in scientific research, engineering modeling, and data analysis \cite{purcell2023accelerating,liu2024automated,cory2024evolving,huang2023exploring,cranmer2023interpretable,reinbold2021robust}. 
However, symbolic regression suffers from high computational complexity and tends to overfit data \cite{wang2023scientific, udrescu2020ai0, udrescu2020ai, weng2020simple}.

Dimensional analysis (DA) is a mathematical method that focuses on the relationships between fundamental units of physical quantities, such as length, mass, and time \cite{tan2011dimensional,barenblatt2003scaling,brunton2016discovering,reynolds1883iii,ye2019energy}. It serves several purposes, including verifying the correctness of equations, simplifying complex problems, and deriving dimensionless parameters.
The foundation of dimensional analysis lies in the principle of dimensional homogeneity, which utilizes the Buckingham $\Pi$ theorem \cite{buckingham1914physically} to combine variables into dimensionless groups, ensuring that the resulting equations remain valid across different unit systems.
As a simple yet powerful tool, dimensional analysis is widely used in fields such as fluid mechanics, thermodynamics, and engineering sciences \cite{xie2022data,zhao2019bulk,gan2021universal,saha2021hierarchical,bakarji2022dimensionally,ma2024dimensional}. 
Despite its solid theoretical framework, identifying dimensionless groups requires extensive searching in a vast space, which hampers its practical application.

In this study, we propose a data-driven approach called FIND (Formulas IN Data) to tackle the high search costs and formula overfitting issues associated with the above data-driven discovery methods.
We first utilize the Buckingham $\Pi$ theorem and Taylor's theorem to create a unified representation of natural scientific laws, effectively transforming the problem of scientific discovery into an optimization task involving the structure and parameters of the target formula.
Specifically, our approach features an explicit two-stage structure: the $\Pi$ theorem guides the design of a latent layer to uncover hidden variables from the inputs, while Taylor's theorem shapes the expression layer, establishing an equivalent polynomial mapping from latent variables to output.
Next, we identify the structure of the target formula by identifying the contributing input variables using SHAP (Shapley Additive Explanations)\cite{lundberg2017unified} and the topological connections with the latent variables by designing a statistical inference scheme tailored for dense datasets. 
To further optimize the parameters efficiently, we impose dimensional constraints to ensure physically reasonable mathematical expressions as well as reduce the search space, and also introduce a progressive searching strategy to pursue the latent variables.
If the final expression is overly complex, symbolic regression is applied to simplify the polynomial into a more compact form.
The proposed approach functions as a universal framework that operates on diverse data with minimal manual configuration, and offers simple formulas with higher accuracy and significantly faster efficiency than existing methods.    

We performed thorough validation to showcase the versatility of FIND in four key areas: (i) recovering physical laws from astronomical observations, (ii) identifying dimensionless numbers in industrial systems, (iii) discovering partial differential equations in mechanical or electrical systems, and (iv) extracting the uniform critical parameters for circuit systems. 
Our results, based on 11 distinct datasets, establish FIND as an effective and general-purpose tool for data-driven scientific discovery across various domains.

\begin{figure*}[htbp]
\centering
\centerline{\includegraphics[width=\linewidth]{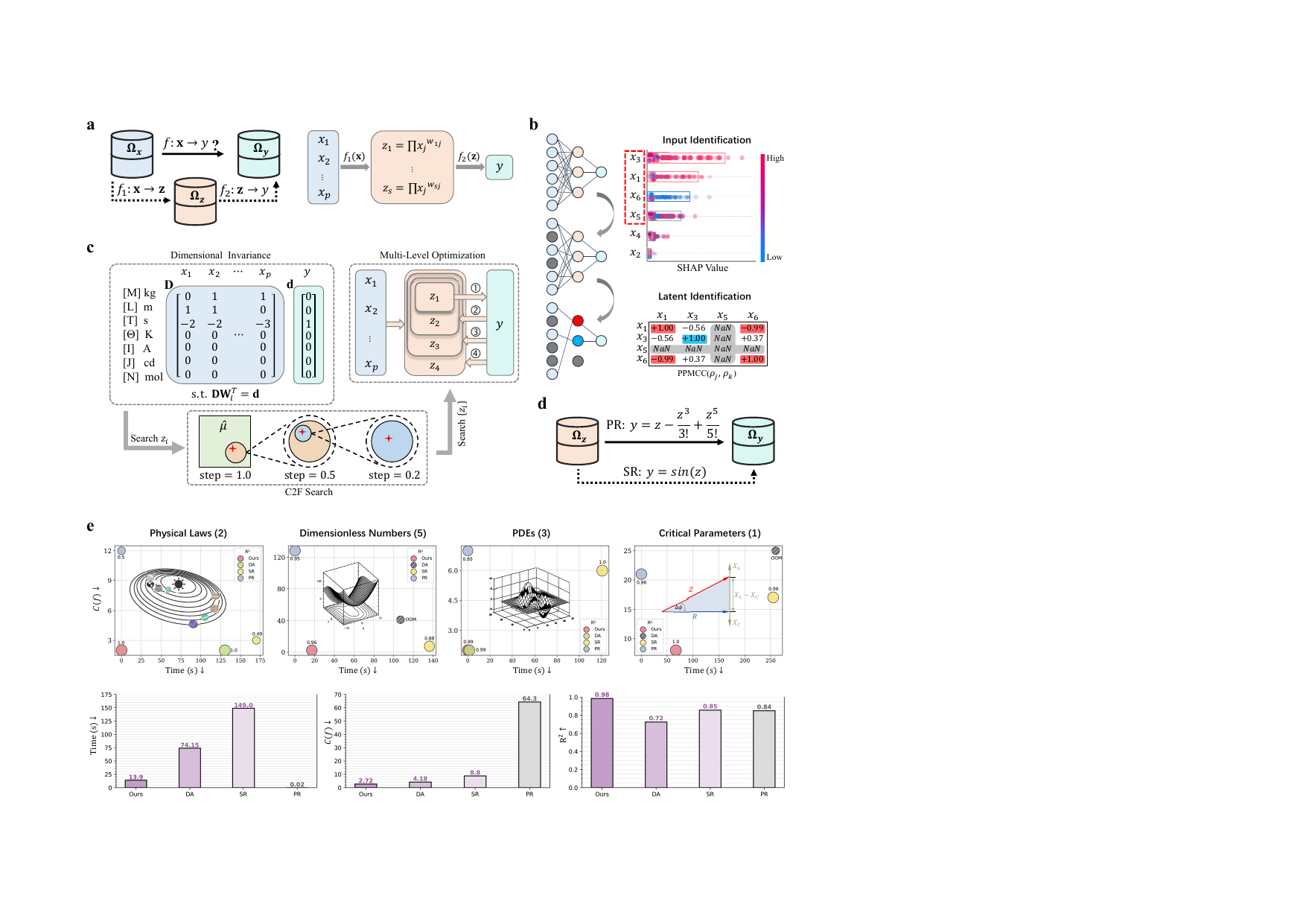}}
\caption{
\textbf{Overview of the proposed FIND framework.} 
The designed two-stage structure assumes that the dataset $(\Omega_x,\Omega_y)$ has an underlying low dimensional latent space $\Omega_z$, and decomposes the function $y=f(\B x)$ into $\B z=f_1(\B x)~\text{and}~y=f_2(\B z)$.
Inspired by the Buckingham $\Pi$ theorem, we define $\B z$ as a power product of $\B x$ to reduce its dimension; and according to Taylor's theorem, $f_2$ takes a polynomial form to achieve a simple but universal proxy to diverse formulas. 
\textbf{b,} Structure identification extracting the contributing inputs and latent variables. For inputs, we use SHAP to calculate and rank their contributions to the output, thereby identifying the most useful ones; for the latent variables, if the dataset is sufficiently dense for estimating partial derivatives, we can further calculate a PPMCC matrix and identify the key parameters of the latent variables (e.g., the number and the connections to the input variables). 
\textbf{c,} Parameter optimization of the function with identified structure. We employ dimensional invariance $\B D\B W_i^T = \B d$, with $\B D$ and $\B d$ being the dimensional matrices of the input and output, to ensure the correct physical meaning of expressions and reduce the search space. 
To search a single latent variable $z_i$, we propose an efficient grid searching algorithm that progressively refines the parameter from coarse to fine.
For multiple latent variables $\{z_i\}$, we introduce a multi-level optimization algorithm that optimizes the latent variables one by one until the performance of the discovered expression meets the requirements or stops improving. 
\textbf{d,} Polynomial and simplified symbolic description of the latent discovered formula. 
After determining the optimal latent variables, we obtain the polynomial expression of the formula $f_2$ via polynomial regression (PR), which can be further simplified via symbolic regression (SR) for a more compact form. 
\textbf{e,} Applications of FIND. We apply the proposed FIND for the discovery of the physical laws, dimensionless numbers, PDEs, and uniform critical parameters, covering a total of 11 datasets. 
The upper row compares the performance of FIND against dimensional analysis (DA), symbolic regression (SR), and polynomial regression (PR) on four tasks, in terms of the running efficiency of the algorithm (in runtime), the syntactic complexity ($C(f)$) and accuracy ($R^2$) of the discovered formulas. 
The charts in the lower row present the average performance of the above four methods on all 11 datasets.
}
\label{framework}
\end{figure*}

\vspace{2mm}
\section*{Results}

\subsection*{Overview of the Methods}
The framework of FIND is illustrated in Fig.~\ref{framework}. Given a dataset $(\Omega_x, \Omega_y)$, we assume that $\forall\B x\in\Omega_x$ there exists a function $y=f(\B x) \in \Omega_y$, and it can be decomposed into two successive mappings to a latent space $\B z=f_1(\B x)\in \Omega_z$ first and then to the output $y=f_2(\B z)$, as shown in Fig.~\ref{framework}a. Denoting $\B x = <x_j> \in \mathbb{R}^p$, $\B z = <z_i> \in \mathbb{R}^s$, $\B W = <w_{ij}> \in \mathbb{R}^{s\times p}$ and drawing inspiration from Buckingham $\Pi$ theorem\cite{buckingham1914physically} and Taylor's theorem, we set
\begin{eqnarray}
\begin{split}
    f_1:& ~~z_i=\prod_{j=1}^{p}x_j^{w_{ij}},    \\
    f_2:&~~y=a_0+\sum_{i=1}^s a_i z_i+\sum_{i_1,i_2=1}^s a_{i_1i_2} z_{i_1} z_{i_2}+\ldots.
\end{split}
\label{latent0}
\end{eqnarray}

For efficient discovery of the latent mapping function, we first identify the structure of the two-stage mapping to reduce $f$'s search space, as shown in Fig.~\ref{framework}b. 
First, we calculate and rank the contribution of each input variable to the output using SHAP\cite{lundberg2017unified} and determine the most contributing input variables. 
Furthermore, for dense datasets that allow estimating first-order partial derivatives, we compute the Pearson product-moment correlation coefficient (PPMCC) between $\rho_j=x_j\f{\p y}{\p x_j}$ and $\rho_k=x_k\f{\p y}{\p x_k}$. The PPMCC matrix then subsequently reveals the structural information of the mapping $f$, including the contributing inputs, the number of latent variables, and the connections between them. 

After determining the structure of $f$, we need to optimize the function parameters, i.e., $\B W$ in $f_1$ and $\{a_i\}$ in $f_2$. In Fig.~\ref{framework}c, we analyze the units of $\B x$, $\B z$, and $y$ to ensure dimensional invariance and further reduce the search space. 
For each latent variable $z_i$, we adopt a coarse-to-fine (C2F) grid search on $\B W_i$ to obtain the optimal solution gradually. For multiple latent variables $\B z\in\mathbb{R}^s$, we introduce a multi-level optimization algorithm that learns the set of latent variables one by one, until the performance of the
discovered expression meets the requirements. 
Regarding $\{a_i\}$, polynomial regression is performed directly on $\B z$ and $y$ using the least squares method.
So far, we have achieved the optimal latent variables and their polynomial expression combination defined in $f_2$, and proceed to further simplify the expression. Specifically, we use PySR\cite{cranmer2023interpretable} to perform symbolic regression on the data points in $\Omega_z$ and $\Omega_y$, achieving a more compact representation, as illustrated in Fig.~\ref{framework}d.

The proposed discovery approach is general and applicable to various fields. 
We demonstrate FIND's versatility across 11 datasets covering tasks such as physical laws discovery (2), dimensionless numbers discovery (5), PDEs discovery (3), and critical parameters identification (1). In Fig.~\ref{framework}e, we demonstrate the advantageous performance of FIND over existing methods on four tasks, in terms of running efficiency, simplicity of the discovered formula, and the data-fitting accuracy. The upper row shows separate performance comparisons on each dataset, and the graphs in the lower row compare the average performance of all 11 datasets.
The results demonstrate that FIND can efficiently discover formulas with low complexity, high performance metrics, and physical interpretability. 


\begin{figure*}[htbp]
\centering
\centering
\centerline{\includegraphics[width=0.87\linewidth]{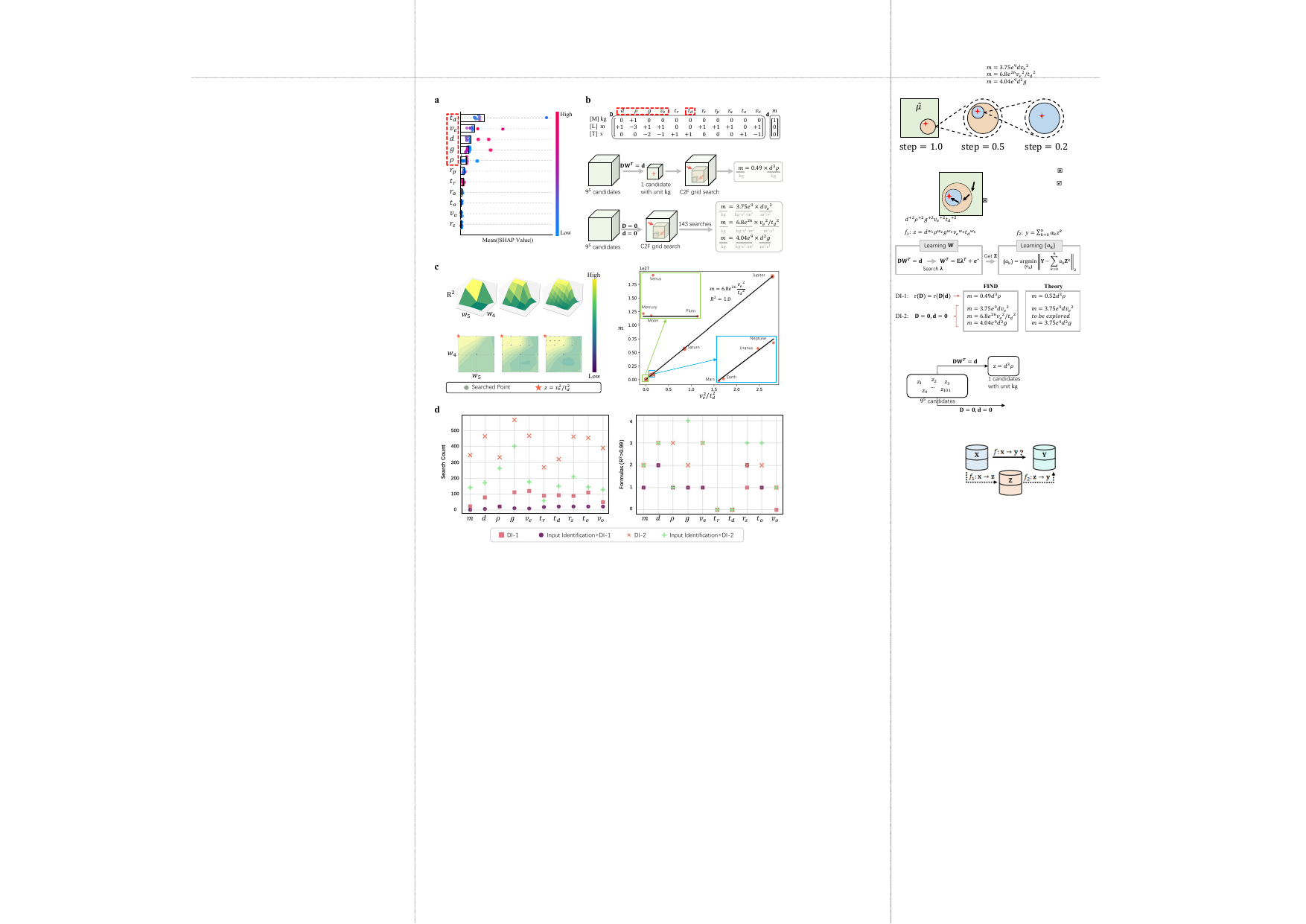}}
\caption{
\textbf{Illustration of FIND for physical laws discovery, with the solar system planets dataset as an example.} \textbf{a,} Input identification using SHAP, which quantifies the average contribution of each input variable to the output, and the 5 most influential variables were selected as the final input of the final formula. 
\textbf{b,} Dimensional invariance (DI) constraints. top: $\B D$ and $\B d$ are the dimension matrices of the input and the output. Here limit $\B W$ to $[-2, +2]^5$ and the minimum precision 0.5, and $9^5$ searches are required. 
middle: 
The strategy imposing DI constraint during parameter optimization (DI-1) by enforcing $\B D\B W=\B d$, which directly reduces the $9^5$ candidate values to just one and ultimately obtains the result $m=0.49d^3\rho$. 
If multiple candidate values satisfy $\B D\B W=\B d$, we further perform a C2F search in the small candidate set.  
bottom:
The strategy applying unit constraints after C2F parameter optimization, which reduces the initial $9^5$ candidates to 143. 
Ultimately, we obtain three high-scoring formulas and assign proper units to the constants to ensure dimensional consistency between the two sides of the formulas.
\textbf{c,} left: A visualization of the C2F grid search of $v_e$'s power $w_4$ and $t_d$'s power $w_5$, with the searched optimal values being 2 and -2, i.e., $z=v_e^2/t_d^2$.  right: the discovered polynomial function with high score ($R^2 = 1.0$) based on the identified latent variable $z$ in the left panel. 
\textbf{d,} Performance comparison under four different settings (two DI strategies, and whether adopting input identification), when selecting different variables as the output. Here we systematically compare the number of required search iterations (left) and the number of discovered high-$R^2$ formulas (right).
}
\label{nasa}
\end{figure*}

\subsection*{Physical Laws Discovery}
To showcase FIND's capability and efficiency in uncovering physical laws, we rediscover the astronomical formulas from actual measurements of the planets in the solar system (reported in the NASA Planetary Fact Sheet\cite{nasa2017factsheet}) and exoplanet recordings (from the NASA exoplanet archive\cite{nasa2017archive}).

For planets in the solar system, the set of variables $S=\{m,d_0,\rho_0,g,v_e,t_r,t_d,r_s,r_p,r_a,t_o,v_o\}$ includes the mass of the planet $m$, diameter $d$, density $\rho$, gravitational acceleration $g$, escape velocity $v_e$, rotation period $t_r$, length of day $t_d$, distance from the sun $r_s$, perihelion $r_p$, aphelion $r_a$, orbital period $t_o$, orbital velocity $v_o$, with standard units. The objective is to find $y=f_2\circ f_1(S\backslash\{y\}),~\forall y\in S$.

Taking the planetary mass $m$ as an example, after identifying the contributing variables--- $t_d,v_e,d,g,\rho$  following the steps in Fig.~\ref{nasa}a, the basic physical quantities involved here include mass (kg), length (m), and time (s). Targeting for simple formulas, we set $s=1$ to search for a single latent variable $z$ and the expressions predefined in Eq.~\ref{latent0} turn into
\begin{equation}
\begin{split} 
&f_1:z=d^{w_1}\rho^{w_2}g^{w_3}v_e^{w_4}t_d^{w_5},\\
    &f_2:m=\sum_{k=0}^n a_kz^k.
\end{split}
\end{equation}
We denote the target weights as $\B W=[w_1,\ldots,w_5]$, the dimensional matrices of the contributing input variables and the output $m$ respectively as $\B D$ and $\B d$, and have 
\begin{equation}
\B D=\left[\!
    \begin{array}{rrrrr}
        0,& +1,& 0,& 0,& 0 \\
        +1,& -3,& +1,& +1,& 0 \\
        0,& 0,& -2,& -1,& +1
    \end{array}
\!\right], \\
\B d=\left[\!
    \begin{array}{c}
        1 \\
        0 \\
        0
    \end{array}
\!\right].
\end{equation}
To force the unit of the latent variable to be consistent with the output, we have the following linear nonhomogeneous equation.
\begin{equation}
    \B D\B W^T=\B d.
\label{unit constrain}
\end{equation}
Using the Gaussian elimination method, one can calculate the basis and bias vectors of this equation as
\begin{equation}
\small
\B E_1=\left[\!
    \begin{array}{r}
        -0.5 \\
        0.0 \\
        -0.5 \\
        +1.0 \\
        0.0
    \end{array}
\!\right], \\
\B E_2=\left[\!
    \begin{array}{r}
        -0.5 \\
        0.0 \\
        +0.5 \\
        0.0 \\
        +1.0
    \end{array}
\!\right], \\
\B e^*=\left[\!
    \begin{array}{r}
        3 \\
        1 \\
        0 \\
        0 \\
        0
    \end{array}
\!\right],
\end{equation}
and derive its closed-form solution:
\begin{equation}
    \B W^T=\lambda_1\B E_1+\lambda_2\B E_2+\B e^*=\B E\boldsymbol\lambda^T+\B e^*.
\end{equation}
So far, the task of optimizing $\B W\in\mathbb{R}^{1\times 5}$ turns into searching for $\boldsymbol\lambda\in\mathbb{R}^{1\times 2}$ that fits the data observation with high accuracy. 

As shown in Fig.~\ref{nasa}b, here we have established two schemes for imposing dimensional invariance: DI-1 that applies unit constraints during search, and DI-2 that applies post-search unit constraints.
The final discovered formula of the former scheme is $m=0.49d^3\rho$, which is consistent with the theoretical result $m=\pi d^3\rho/6\approx 0.52d^3\rho$.  The latter obtained 3 formulas: $m=3.75e^9dv_e^2$, $m=6.8e^{26}v_e^2/t_d^2$, and $m=4.04e^9d^2g$. The first and third formulas are both consistent with the theoretical values, while the second one remains to be explored. We conducted further investigation into this unexplored formula and found that it primarily holds for distant massive planets, whereas smaller-mass planets deviate to some extent, especially Mercury and Venus. Such unexplored expression indicates the potential of FIND in discovering new formulas from data recordings, thereby inspiring worthwhile astronomical studies. 
In both schemes, we conduct efficient parameter search in a coarse-to-fine manner. For easy demonstration, here we only show the optimization process of $w_4$ and $w_5$ (the most influential inputs to $m$, $v_e$, and $t_d$) in Fig.~\ref{nasa}c. One can observe that the optimal estimation of $(w_4, w_5)$ is $(+2, -2)$, with an $R^2$ score as high as 0.9999 and the resulting functions being
\begin{equation}
\begin{split}
    &f_1: z=v_e^2/t_d^2, \\
    &f_2: m=-1.38e^{24}+6.8e^{26}z\approx 6.8e^{26}z.
\label{formula}
\end{split}
\end{equation}

We further treat variables other than $m$ as outputs, and follow the same procedure to discover additional formulas.  
In sum, we have found a total of 58 formulas, among which only the formula in Eq.~\ref{formula} lacks theoretical proof, while all others can be theoretically derived. The 58 formulas, along with their theoretical proofs and the validation of Eq.~\ref{formula}, are detailed in the Supplementary Materials.

The left panel in Fig.~\ref{nasa}d demonstrates that pinpointing contributing inputs and imposing dimensional invariance significantly reduce the number of search iterations. For example, during the search for the formulas that output $d$, DI-2 needed 464 iterations and DI-1 required 80 iterations; after introducing input identification, these numbers drop significantly to 173 and 6 iterations, respectively.
Comparing two schemes of applying dimensional invariance constraints, DI-2 has an advantage in that it is more flexible and can discover a wider range of potential formulas. The right panel in Fig.~\ref{nasa}d shows that DI-2 can discover more valid formulas (with $R^2>0.99$). Notably, when combined with input identification, it not only reduces the search space but also yields an increased number of identified formulas.


We have also validated the effectiveness of FIND in discovering physical formulas using an exoplanet dataset. In particular, we gathered and integrated data from Solar, Trappist-1, KOI-351, and GJ 667 based on the latest research publications, and then used FIND to explore the relationship between planetary orbital periods and other variables. 
The optimal solution was obtained in just 26 search iterations within 0.14 seconds, and the final solution $t_o=5.43e^{-10}r_o^{1.5}$ verifies Kepler's third law. For details, please refer to the Supplementary Materials.

\subsection*{Dimensionless Numbers Discovery}

\begin{figure*}[htbp]
\begin{center}
\centerline{\includegraphics[width=0.98\linewidth]{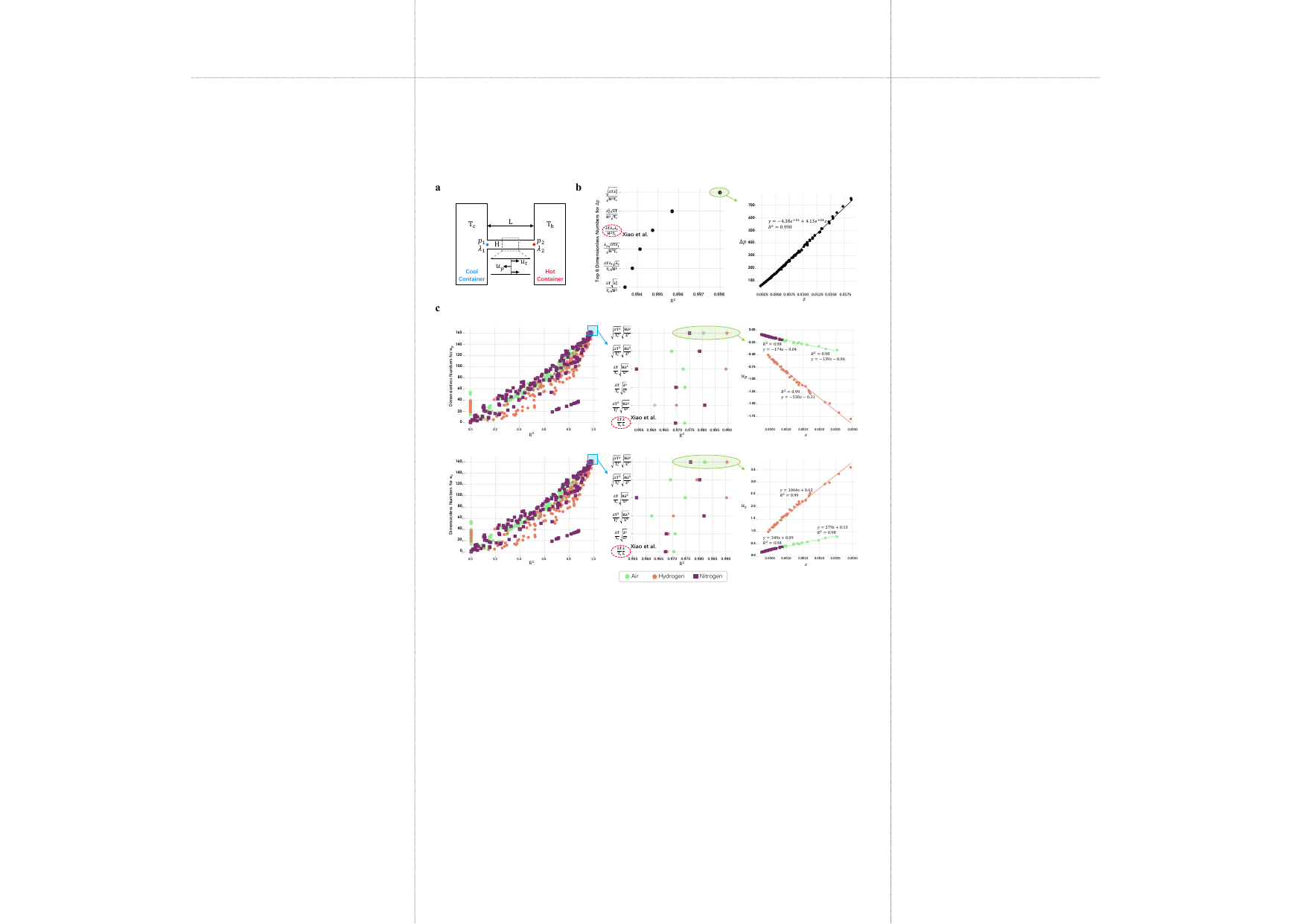}}
\caption{\textbf{Demonstration of applying FIND for dimensionless numbers discovery.} \textbf{a,} An illustration of a typical hydrogen Knudsen compressor, consisting of a cold container, a hot container, and a microchannel (with length $L$ and height $H$). 
When there exists a temperature difference $\Delta T=T_h-T_c$ between containers, a pressure difference $\Delta p=p_2-p_1$ occurs across the microchannel with the mean free paths at the two sides being $\lambda_1$ and $\lambda_2$, accompanied by liquid flow. 
The Poiseuille flow $u_p$ mainly distributes near the centerline of the microchannel, while the thermal transpiration flow $u_t$ nears the wall. 
\textbf{b,} The top 6 dimensionless numbers for predicting $\Delta p$ (left) and the corresponding prediction function of the highest-ranked dimensionless number $z=\sqrt{\f{\Delta T\lambda_2^3}{H^3T_c}}$, which exhibits a linear mapping $\Delta p=-4.38e^{+01}+4.13e^{+04}z$ (right). The dimensionless number $\f{\Delta T\lambda_1\lambda_2}{H^2T_c}$ (highlighted with red ellipse) was previously identified in \cite{xiao2023dimensional}. 
\textbf{c,} Discovered dimensionless numbers for Poiseuille flow $u_p$ (upper) and thermal transpiration flow $u_t$ (lower), by applying FIND on air, hydrogen, and nitrogen measurements separately. 
For each prediction variable ($u_p$ or $u_t$), the left panel presents the discovered 163 dimensionless numbers shared by three gases, which show highly-consistent trends. We show the zoomed-in view of the top 6 dimensionless numbers with high $R^2$ scores in the middle panel, including a previously identified dimensionless number $\f{\Delta T\lambda}{T_cL}$\cite{xiao2023dimensional}. In the right panel, we further plot the prediction functions of the highest-ranked dimensionless number $z=\sqrt{\f{\Delta T^3}{T_c^3}}\sqrt{\f{H\lambda^2}{L^3}}$. 
Here, three gases share the same dimensionless number but exhibit three distinct prediction functions.}
\label{exp2}
\end{center}
\end{figure*}

We demonstrate FIND's applications in discovering dimensionless numbers from data recordings in 5 different industrial or chemistry scenarios.   

We first present the results on the working status of a hydrogen compressor as an example, using a dataset from the work of Xiao et al.\cite{xiao2023dimensional} which contains 120 measurements, with 40 for air, 40 for hydrogen, and 40 for nitrogen. 
A hydrogen Knudsen compressor typically consists of a cold container, a hot container, and a microchannel, as illustrated in Fig.~\ref{exp2}a. 

We first set the microchannel length $L$, microchannel height $H$, cool container's temperature $T_c$, temperature difference between two containers $\Delta T=T_h-T_c$, the mean free paths of the measurement points $\lambda_1$, $\lambda_2$ as input variables, and try to find a dimensionless number that can predict the pressure difference $\Delta p=p_2-p_1$. 
Since the measurements are sparse, we cannot calculate the partial derivatives reliably to determine the number of latent variables. Instead, we start from a single latent variable and increase the number progressively until meeting the desired data fitting accuracy (e.g., $R^2>0.95$). Even with a single variable, i.e., $s=1$, we found 10 dimensionless numbers with $R^2>0.99$ after 681 searches in 1.13s, and the top 6 are shown in Fig.~\ref{exp2}b. 
Here $\f{\Delta T\lambda_1\lambda_2}{H^2T_c}$ is a discovered dimensionless number in \cite{xiao2023dimensional}, and the rest are all new ones among which 2 are with higher $R^2$. $z=\sqrt{\f{\Delta T\lambda_2^3}{H^3T_c}}$ exhibits the highest $R^2$ with the functional relationship $\Delta p=-4.38e^{+01}+4.13e^{+04}z$.
Our findings indicate that the pressure difference $\Delta p$ correlates strongly with dimensionless combinations ($\Delta T/T_c, \lambda_1/H, \lambda_2/H$) and exhibits no dependence on $L$.

Secondly, we tried to find dimensionless numbers for Poiseuille flow $u_p$ distributed mainly near the centerline of the microchannel, and the thermal transpiration flow $u_t$ that mainly near the wall. Here we set $L$, $H$, $T_c$, $\Delta T$ and the mean free paths of the microchannel center $\lambda$ as input variables for prediction. 
In parameter search, when $s$ increases from 1 to 2, $R^2$ only sees a slight improvement from 0.8889 to 0.8944, suggesting that the number of latent variables is likely not the issue.
We then divided 120 data points into air, hydrogen, and nitrogen based on gas types and conducted separate experiments. In Fig.~\ref{exp2}c, the top-left subplot presents the search results for $u_p$, where a total of 163 dimensionless numbers were examined, showing consistent trends across all three gases. The top 6 dimensionless numbers along with their $R^2$ metrics are displayed in an expanded format, including a previously identified dimensionless number $\f{\Delta T\lambda}{T_cL}$\cite{xiao2023dimensional}. 
The dimensionless number $\sqrt{\f{\Delta T^3}{T_c^3}}\sqrt{\f{H\lambda^2}{L^3}}$ emerges as the optimal scaling parameter to characterize $u_p$, with its functional relationships for three distinct gas types presented in the top right subplot. We observed the same dimensionless number appearing across different gases, though with varying expressions. We can also see that the discovered dimensionless numbers contain a large number of forms such as $\Delta T/T_c$, $\lambda/L$, and $H/L$, which have specific physical meanings.

We also attempted to discover dimensionless numbers in four other scenarios: keyholes in laser–metal interaction\cite{zhao2019bulk, gan2021universal}, porosity formation in 3D printing of metals\cite{wang2019dimensionless,kasperovich2016correlation,kumar2019influence,cherry2015investigation,leicht2020effect,simmons2020influence}, turbulent Rayleigh-Bénard convection\cite{xie2022data}, and electrocatalytic reactions\cite{lin2024machine}. We identified some classic dimensionless numbers such as the Keyhole number\cite{gan2021universal, ye2019energy} and Rayleigh number, some previously discovered numbers \cite{xie2022data,lin2024machine},  as well as many new ones with higher metrics. The results demonstrate that FIND can successfully rediscover known dimensionless numbers as well as identify new ones with higher metrics. Detailed experimental procedures and results are provided in the Supplementary Materials. 

\subsection*{PDEs Discovery}

\begin{figure*}[ht]
\begin{center}
\centerline{\includegraphics[width=0.98\linewidth]{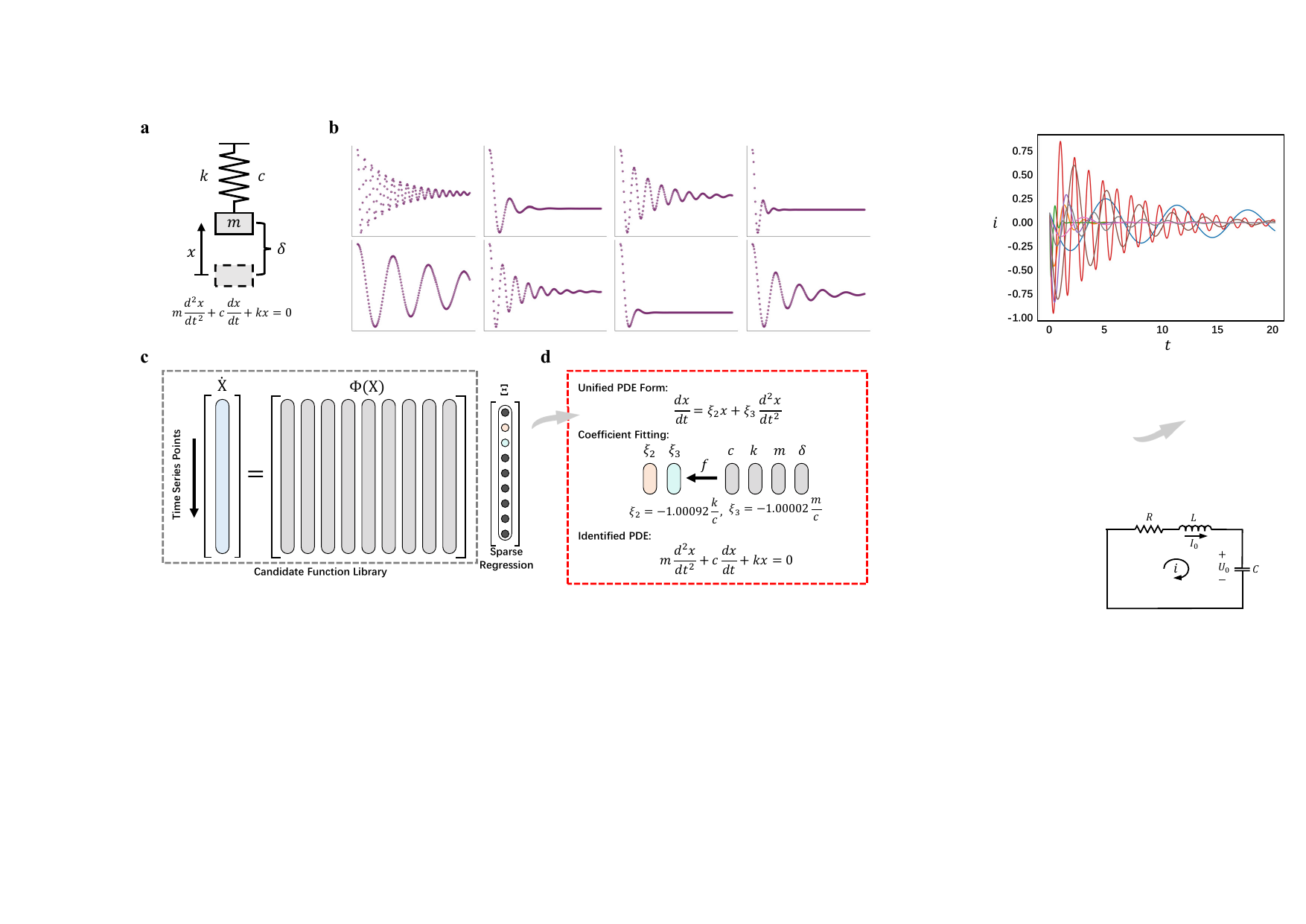}}
\caption{\textbf{Demonstration of FIND's capability of PDEs discovery.} \textbf{a,} The structure of the spring-mass-damper system, consisting of a mass ($m$), a spring ($k$), and a damper ($c$), with initial position $x(0)=\delta$. The system is a classic example of a second-order mechanical model whose PDE equation is $m\f{d^2x}{dt^2}+c\f{dx}{dt}+kx=0$. 
\textbf{b,} 
Eight exemplar time series generated by varying parameters $c, k, m, \delta$, with time ranges in $0\sim 20$s, and each time series uniformly sampled to obtain over 800 sampling points. 
\textbf{c,} The matrix equation for identifying the governing equation of each system using SINDy, i.e., 
$\dot{\B X}=\Phi(\B X)\Xi$, in which $\dot{\B X}$ is the vector composed of $\f{dx}{dt}$ values at different data points, the basis set $\Phi(\B X) = \{1,x,\f{d^2x}{dt^2},x^2,x\f{dx}{dt},x\f{d^2x}{dt^2},(\f{dx}{dt})^2,\f{dx}{dt}\f{d^2x}{dt^2},(\f{d^2x}{dt^2})^2\}$, 
and the coefficient vector $\Xi=\{\xi_1,\xi_2,\cdots,\xi_9\}$ is ultimately determined via sparse regression on the observed data. The PDEs for different systems follow a unified form $\f{dx}{dt}=\xi_2x+\xi_3\f{d^2x}{dt^2}$, but with varying coefficients $\xi_2$ and $\xi_3$. 
\textbf{d,} The unified PDE discovered by FIND from 8 time series datasets of different system parameters $c, k, m, \delta$, obtained by establishing the coefficients between $\xi_2,\xi_3$ and $c, k, m, \delta$ and ultimately identifying the unified PDE of the spring-mass-damper system.}
\label{exp3}
\end{center}
\end{figure*}

In this section, we evaluate FIND's ability to uncover latent PDEs using data from three different fields in mechanics or electronics.

We first tested on the spring-mass-damper system, which
consists of a mass $m$ characterized by its position $x(t)$, with $x(0)=\delta$, a spring with stiffness $k$, and a damper with coefficient $c$, as shown in Fig.~\ref{exp3}a. It is a classic example of a second-order mechanical model, and the underlying PDE equation is
\begin{equation}
\begin{split}
    m\f{d^2x}{dt^2}+c\f{dx}{dt}+kx=0.
\end{split}
\label{rlc pde}
\end{equation}

By changing the parameters $c, k, m, \delta$, we obtained multiple sets of time series data spanning 20s, as shown in Fig.~\ref{exp3}b. 
Next, we apply SINDy\cite{brunton2016discovering} to identify the governing equations from each time series dataset, as illustrated in Fig.~\ref{exp3}c. For each time series, we uniformly sample 800 points over the interval $t\in[0,20]$, estimate their derivatives through polynomial fitting and differentiation between neighboring points to construct the regression library. Afterward, we fit the following equation via sparse regression
\begin{equation}
    \dot{\B X}=\Phi(\B X)\Xi,
\end{equation}
where $\dot{\B X}$ contains the $\f{dx}{dt}$ values at each data point, $\Phi(\B X)$ contains 9 terms, and $\Xi\in\mathbb{R}^9$ contains the target coefficients. The results show that for each time series dataset, only coefficients $\xi_2$ and $\xi_3$ were non-zero, while all other coefficients vanished, leading to a unified form for all the governing equations
\begin{equation}
    \f{dx}{dt}=\xi_2x+\xi_3\f{d^2x}{dt^2}.
\end{equation}

We observed that $\xi_2$ and $\xi_3$ vary with the system parameters $c, k, m, \delta$.  
Based on the dimensional consistency of the PDE, the units of $\xi_2$ and $\xi_3$ are derived as $s^{-1}$ and $s$, respectively. Using FIND to fit these relationships, we obtained a concise relationship $\xi_2=-\f{k}{c},~\xi_3=-\f{m}{c}$, which exactly recovers the formula given in Eq.~\ref{rlc pde}.

Traditional PDE systems typically require specialized physics knowledge and are derived through theoretical analysis. Here, we propose a novel data-driven approach for PDE discovery that operates entirely without prior expertise. By fixing the system structure while varying its parameters, and combining SINDy with FIND, we successfully recovered the true governing PDE of the system. We believe this approach will enable non-experts to identify the PDE form of systems or achieve the identification of black-box systems with minimal effort.

We further validated the effectiveness of our method for PDEs discovery in two additional scenarios: identification of the Karman vortex street problem\cite{xie2022data} and discovery of the governing equation for a passive series RLC circuit. FIND successfully extracted the Navier-Stokes equation and the second-order governing equation of the RLC circuit from two sparse time-series datasets, respectively, demonstrating high consistency with theoretical results. This further validates the effectiveness of its PDE discovery capability. Detailed experimental procedures and results are provided in the Supplementary Materials.

\subsection*{Critical Parameters Identification}

\begin{figure*}[ht]
\begin{center}
\centerline{\includegraphics[width=\linewidth]{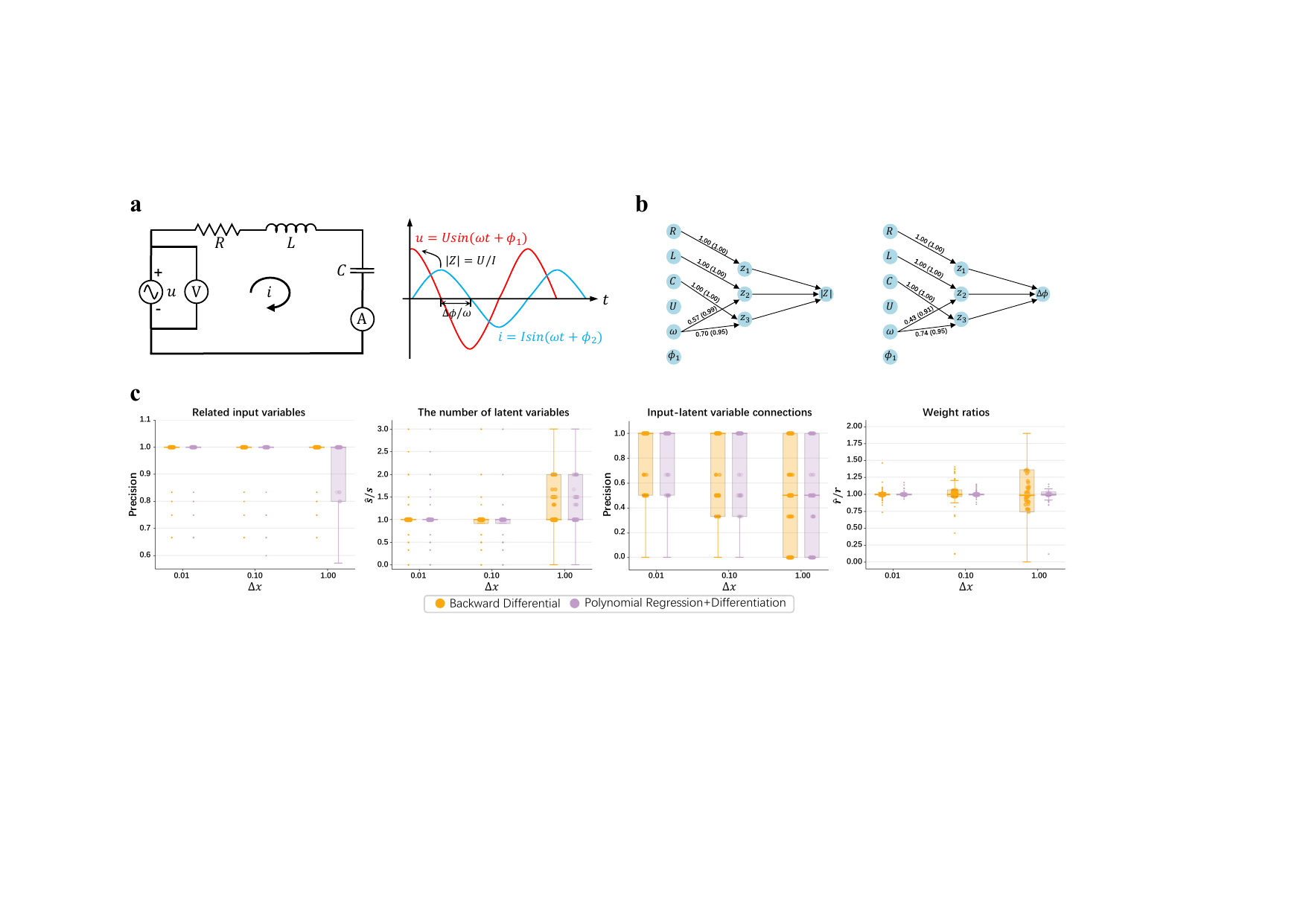}}
\caption{\textbf{Demonstration of using FIND for discovering the critical system parameters of a series RLC AC circuit, from dense observations.} \textbf{a,} The structure of a series RLC AC circuits, consisting of a resistor $R$, an inductor $L$, and a capacitor $C$, with an input voltage $u=Usin(\omega t+\phi_1)$ and output current $i=Isin(\omega t+\phi_2)$. 
We use $R$, $L$, $C$, $U$, $\omega$, and $\phi_1$ as independent variables to predict the two key parameters of the circuit---impedance magnitude of the circuit $|Z|=U/I$ and the phase difference $\Delta\phi=\phi_1-\phi_2$. 
\textbf{b,} The results of the identified latent structures of the two target prediction values. 
The numbers on each arrow represent the estimated weight ratio and the corresponding confidence (in round brackets). 
\textbf{c,} Accuracy of the identified latent structure at three data densities, using two strategies for partial derivative estimation---backward differential and differentiation following polynomial fitting with neighboring points. 
From left to right, we plot the accuracy of the identified contributing inputs, the number of latent variables, the connections between inputs and the latent variables, and the estimations of connection weights, with the theoretical optimal accuracies being all 1.0. 
}
\label{exp4}
\end{center}
\end{figure*}



In system analysis, identifying critical parameters is essential for understanding and optimizing performance. These parameters often determine a system's dynamic response, stability, and interactions with other systems. However, some systems involve multiple latent variables, which may lead to the curse of dimensionality. 


In the above experiments, we pursue the latent variables progressively to meet the accuracy criteria, thereby dramatically reducing the search space.
Here, for datasets with sufficiently dense observations supporting partial differential estimation, we propose a novel algorithm that identifies the topological structure of the two-stage function (including the contributing inputs, the number of latent variables, as well as the connection between) directly from the data observations. An explicit topological structure can significantly narrow down the search space. 

Taking the series RLC AC circuit in Fig.~\ref{exp4}a as an example, the circuit consists of a resistor $R$, an inductor $L$, and a capacitor $C$, with an input voltage $u=Usin(\omega t+\phi_1)$ and output current $i=Isin(\omega t+\phi_2)$. By varying $R$, $L$, $C$, $U$, $\omega$, $I$, and $\phi_1$, we obtained 31,250 pairs of $u$ and $i$ measurements, and used these variables to predict the two critical parameters of the circuit---impedance magnitude of the circuit $|Z|=U/I$ and the phase difference $\Delta\phi=\phi_1-\phi_2$.



We set $|Z|$ as the output and first calculated $\boldsymbol\rho=<\rho_j> \in \mathbb{R}^6$ at each sampling point, where $\rho_j=x_j\f{\p y}{\p x_j}$. Then, we computed the Pearson product-moment correlation coefficient for $\rho_j$ and $\rho_k$ and obtained a PPMCC matrix
\begin{equation}
\small
\begin{split}
\text{PPMCC}_{|Z|}=
\left[\!
    \begin{array}{rrrrrr}
        \bold{+1.00}, & -0.37, & -0.16, & NaN, & -0.32, & NaN \\
        -0.37, & \bold{+1.00}, & +0.90, & NaN, & \bold{+0.99}, & NaN \\
        -0.16, & +0.90, & \bold{+1.00}, & NaN, & \bold{+0.95}, & NaN \\
        NaN, & NaN, & NaN, & NaN, & NaN, & NaN \\
        -0.32, & \bold{+0.99}, & \bold{+0.95}, & NaN, & \bold{+1.00}, & NaN \\
        NaN, & NaN, & NaN, & NaN, & NaN, & NaN 
    \end{array}
    \!\right].
\end{split}
\end{equation}
If $|\text{PPMCC}(\rho_j, \rho_k)|\geq 0.95$, we consider that $x_j$ and $x_k$ connect with the same latent variable. 
We can observe that there are three latent variables: $z_1$ is only associated with $R$; $z_2$ is related to both $L$ and $\omega$ with $sign(w_{22})=sign(w_{25})$; $z_3$ is connected to $C$ and $\omega$ with $sign(w_{33})=sign(w_{35})$; $U$ and $\phi_1$ are un-correlated variables, as illustrated in the left panel of Fig.~\ref{exp4}b. Similarly, the latent structure identification process for $\Delta\phi$ follows the same methodology, with the final results illustrated in Fig.~\ref{exp4}b, right.

The identified structure can serve as a constraint for the search space of $\B W$, which reduces the $9^{3\times 6}$ trials required before latent structure identification to only 342 ones. The search takes 19 seconds in total, and the final results are 
\begin{equation}
\begin{split}
    &|Z|=f_{|Z|}(\f{1}{R},\omega^2L^2,\f{1}{\omega C}), \\
    &\Delta\phi=f_{\Delta\phi}(\f{1}{R^2},\omega L,\f{1}{\omega^2C^2}).
\end{split}
\end{equation}
The obtained functions $f_{|Z|}$ and $f_{\Delta\phi}$ are both 5th-order polynomials, which are highly complex and are not intuitive for human understanding. After determining the optimal latent variables, we used PySR\cite{cranmer2023interpretable} to perform symbolic regression on the latent variables and the outputs, simplifying $f_{|Z|}$ and $f_{\Delta\phi}$ to the following equations:
\begin{equation}
\small
\begin{split}
    &|Z|=\sqrt{z_1^{-2}+(\sqrt{z_2}-z_3)^2}=\sqrt{R^2+(\omega L-\f{1}{\omega C})^2}, \\
    &\Delta\phi=arctan(\sqrt{z_1}(z_2-\sqrt{z_3}))=arctan(\f{\omega L-\f{1}{\omega C}}{R}),
\end{split}
\end{equation}
which aligns well with the theoretical formulas. 

Note that latent structure identification requires estimating partial derivatives, so the data density and derivative calculator are both important. Here, we present the performance across three datasets with different sparsity levels, using two derivative estimation schemes. 
We generated 100 composite functions $y=f_2(f_1(\B x))$ with $p\in [3,8]$ and $s\in [1,min(p,4)]$, where $f_1$ had random connections/weights between $\B x$ and $\B z$, and $f_2$ was constructed through random combinations of mathematical operators. We sampled each function at three intervals ($\Delta x=0.01, 0.10, 1.00$), producing three datasets with identical functions but different sparsity levels.
The identified latent structures are evaluated in terms of the accuracy of identifying i) the contributing variables, ii) the number of latent variables, iii) the connection between inputs and the latent variables, as well as iv) the weight of the connection. The accuracy of i) and iii) is assessed using $Precision = TP / (TP + FN)$, ii) and iv) are evaluated by the ratio between the estimated version and the true values. The ideal scores for these four evaluations are all 1.0, and the results are shown in Fig.~\ref{exp4}c. 
It can be observed that a smaller $\Delta x$ (i.e., denser data) can produce more accurate derivatives, and thus a more accurate latent structure. The backward difference method provides better identification of contributing input variables, while the differentiation following polynomial fitting yields more accurate weight estimation. 

Our proposed identification algorithm for dense datasets successfully resolves the two-stage mapping of the critical system parameters, overcoming the curse of dimensionality from multiple latent variables, and can be applied in controlled variable experiments in the scientific field.

\section*{Discussions}

In this article, we introduce a new data-driven approach for discovering implicit formulas from observations across various fields, called FIND (Formulas IN Data). Our method utilizes the Buckingham $\Pi$ theorem and Taylor's theorem to unify the representation of scientific laws into an explicit two-stage composite function. This allows us to express scientific discoveries as an optimization involving the structure and parameters of the latent function.
The optimization demands navigating a vast solution space, and we proposed to strategically cut down the computational cost from hundreds of millions to just tens, by identifying the topological structure of the two-stage function, as well as accelerating parameter optimization by forcing dimensional invariance, conducting a C2F grid search, and pursuing latent variables progressively with a multi-level optimization algorithm. 
The resulting polynomials act as temporary proxies for the original function, facilitating the rapid identification of the optimal latent variable. If the identified polynomial is overly complex or if a more physically meaningful formulation is desired, symbolic regression can be employed to refine it into a more concise and compact expression.

Compared to traditional scientific discovery methods such as dimensional analysis (DA), symbolic regression (SR), and polynomial regression (PR), the FIND method offers notable advantages in speed, formula simplicity, data-fitting accuracy, and physical interpretability. However, FIND does not completely disregard these existing methods; instead, it draws insights from all three approaches. By incorporating well-designed mechanisms, FIND achieves performance that surpasses existing methods and demonstrates wide applicability across various domains.

We demonstrate the versatility of FIND across 11 datasets, encompassing tasks such as the discovery of physical laws, dimensionless numbers, partial differential equations, and critical system parameters. 
In the discovery of physical laws, we identified 58 high-scoring formulas from the planetary dataset of the solar system, 57 of which could be theoretically validated, and the remaining one experimentally validated as applicable to large-mass planets and inspires further astronomical studies of the underlying laws or working mechanism. 
Regarding the discovery of dimensionless numbers, our approach not only rediscovers known dimensionless numbers but also uncovers many new ones with higher scores. These new numbers can be crucial for understanding, predicting, and controlling system outputs.
In terms of PDEs discovery, the combination of SINDy and FIND enables the direct derivation of the PDE of a system from observed data, eliminating the need for domain expertise and opening up possibilities for applications in system identification.
For the data with sufficient density to calculate the partial derivatives, our algorithm efficiently infers the number of latent variables alongside their connections with the contributing inputs, which helps to identify the critical system parameters.

Despite the high performance and wide applicability, the FIND framework still has some limitations that need to be addressed. Firstly, while many functions can be approximated locally by polynomials, we lack prior knowledge about the optimal polynomial degree. As a result, we typically use a 5th-order polynomial by default, which is sufficient for most cases. 
We now offer customized settings to the users, and an extension with flexible estimation of the polynomial orders can be another option.
Secondly, the algorithm requires estimating partial derivatives to explicitly identify the latent structure of the target formula, making it most suitable for dense datasets, particularly those where variables can be controlled experimentally. We are open to more advanced mathematical techniques for estimating partial derivatives from sparse observations. 
Another limitation is that the latent variables in FIND are expressed solely in the form of power products, which can be addressed by extending the candidate terms of the input to cover the formulas beyond the approximation capability of polynomials.

\newpage
\section*{Methods}


\subsection*{Buckingham $\Pi$ Theorem and Taylor’s Theorem}
\begin{theorem}
{\rm(Buckingham $\Pi$ Theorem\cite{buckingham1914physically})} Assuming that a physical quantity $y$ can be expressed as a function of other physical quantities $x_1,\ldots,x_p$, where $x_1\sim x_d$ are mutually independent, and $x_{d+1}\sim x_p$ are not independent of $x_1\sim x_d$, then
\begin{equation}    y=f(x_1,\ldots,x_p)=f_1(\Pi_1,\ldots,\Pi_{p-d})\prod_{j=1}^d x_j^{q_j},
\label{pi}
\end{equation}
where $\Pi_i=\frac{x_{d+i}}{\prod_{j=1}^d x_j^{q_{ij}}} (i=1,\ldots,p-d)$ are dimensionless numbers.
\end{theorem}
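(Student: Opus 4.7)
The plan is to prove the theorem by exploiting dimensional homogeneity together with a rescaling argument that reduces the number of free variables from $p$ to $p-d$. First I would set up the linear-algebraic backbone: let the fundamental dimensions be $D_1,\dots,D_k$, and associate to each variable $x_j$ (and to $y$) its exponent vector in $\mathbb{Q}^k$. The hypothesis that $x_1,\dots,x_d$ are dimensionally independent means their exponent vectors are linearly independent, so in particular every other dimension vector that appears in the problem lies in their span. This immediately guarantees the existence (and, in the relevant quotient sense, uniqueness) of rationals $q_j$ with $[y]=\prod_{j=1}^{d}[x_j]^{q_j}$, and of rationals $q_{ij}$ with $[x_{d+i}]=\prod_{j=1}^{d}[x_j]^{q_{ij}}$, so the quantities
\begin{equation}
\Pi_i=\frac{x_{d+i}}{\prod_{j=1}^{d}x_j^{q_{ij}}},\qquad \Pi_y=\frac{y}{\prod_{j=1}^{d}x_j^{q_j}}
\end{equation}
are all dimensionless by construction.

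Next I would invoke the principle of dimensional homogeneity: since $y=f(x_1,\dots,x_p)$ is supposed to be a physical law, its truth cannot depend on the choice of units for $D_1,\dots,D_k$. Concretely, if we rescale the unit of $D_\ell$ by a factor $\lambda_\ell>0$, then each $x_j$ gets multiplied by the appropriate product of $\lambda_\ell$'s, and so does $y$, and the functional relation must still hold after substitution. I would then choose the $\lambda_\ell$ specifically so that the rescaled versions of $x_1,\dots,x_d$ all equal $1$; the dimensional independence of $x_1,\dots,x_d$ is exactly what guarantees that this system of $d$ equations in $k$ unknowns (or a suitable subset) is solvable. Under that choice, the rescaled $x_{d+i}$ become exactly $\Pi_i$ and the rescaled $y$ becomes $\Pi_y$, so $f$ evaluated on the rescaled arguments yields a relation of the form $\Pi_y=f_1(\Pi_1,\dots,\Pi_{p-d})$. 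Multiplying back through by $\prod_{j=1}^{d}x_j^{q_j}$ gives the claimed factorization.

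The main obstacle I anticipate is not the algebra but the careful handling of dimensional homogeneity as an actual mathematical hypothesis rather than a physical slogan. I would make this precise by assuming $f$ is covariant under the action of the unit-rescaling group $(\mathbb{R}_{>0})^k$ on the argument tuple, and then the rescaling step above is simply evaluation of this covariance along a specific orbit. A secondary delicate point is verifying that the choice of $q_j,q_{ij}$ is consistent across the whole domain (so that $f_1$ is genuinely a function of the $\Pi_i$ alone and not of the auxiliary scaling parameters); this follows because any two solutions of $\mathbf{D}\mathbf{q}=[y]$ differ by a vector in the kernel of the dimension matrix of $x_1,\dots,x_d$, and that kernel is trivial by dimensional independence. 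Once these two points are nailed down, the rest of the proof is essentially bookkeeping.
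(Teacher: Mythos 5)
The paper does not prove this theorem at all: it is stated as a classical result and attributed to Buckingham's 1914 paper, with the only in-text proof being that of the subsequent Corollary~1 (a one-line counting observation). So there is no in-paper argument to compare against; what follows is an assessment of your proposal on its own terms.

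Your argument is the standard and correct proof of the $\Pi$ theorem: encode dimensions as exponent vectors, use linear independence of $[x_1],\dots,[x_d]$ to define the $q_j$ and $q_{ij}$ uniquely, formalize dimensional homogeneity as covariance of $f$ under the action of the unit-rescaling group $(\mathbb{R}_{>0})^k$, and evaluate along the group element that normalizes $x_1,\dots,x_d$ to $1$. The two delicate points you flag (making homogeneity a genuine hypothesis rather than a slogan, and well-definedness of the $q$'s) are exactly the right ones. Two smaller gaps remain if you want this airtight. First, you assert that ``every other dimension vector that appears in the problem lies in their span''; for the $x_{d+i}$ this is the stated hypothesis, but for $y$ it is not --- you need either to assume the variable list is dimensionally complete (so $[y]\in\mathrm{span}\{[x_1],\dots,[x_p]\}=\mathrm{span}\{[x_1],\dots,[x_d]\}$) or to derive it from homogeneity of the relation itself; as written this step is silently borrowed. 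Second, the normalizing orbit argument needs $x_1,\dots,x_d>0$ (or at least nonzero, with a sign bookkeeping convention) for the rescaled values to be set to $1$ and for the powers $x_j^{q_j}$ with rational $q_j$ to be defined; this positivity assumption should be made explicit, as the resulting factorization is only claimed on the open orbit where it holds. Neither issue is a wrong turn --- both are standard hypotheses of the theorem that just need to be surfaced --- so the proposal is essentially complete.
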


\begin{theorem}
{\rm(Taylor's Theorem\cite{buckingham1914physically})} Let $f(\B x)$ be a function that is $(n+1)$ times differentiable at a point $\hat{\B x}$,  $f(\B x)$ can be approximated by a Taylor polynomial of degree $n$ around $\hat{\B x}$ as:
\begin{equation}
\begin{split}
    f(\B x)&=f(\hat{\B x})+f'(\hat{\B x})(\B x-\hat{\B x})+\frac{f''(\hat{\B x})}{2!}(\B x-\hat{\B x})^2 \\
    &+\ldots+\frac{f^{(n)}(\hat{\B x})}{n!}(\B x-\hat{\B x})^n + R_n(\B x),
\end{split}
\end{equation}
where $R_n(\B x)=\frac{f^{(n+1)}(\boldsymbol\xi)}{(n+1)!}(\B x-\hat{\B x})^{n+1}$ represents the Lagrange remainder, and $\boldsymbol\xi$ lies between $\hat{\B x}$ and $\B x$.
\end{theorem}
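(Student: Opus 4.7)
The plan is to reduce to the one-dimensional setting and then apply a classical auxiliary-function argument together with Rolle's theorem. Concretely, if $\B x$ is genuinely multivariate, I would define the single-variable function $g(t) = f\bigl(\hat{\B x} + t(\B x - \hat{\B x})\bigr)$ on $[0,1]$, observe that the chain rule gives $g^{(k)}(0)$ as a directional derivative matching the $k$-th term in the stated expansion, and note that proving a Lagrange remainder for $g$ at $t=1$ about $t=0$ yields the theorem for $f$ after a change of variable $\boldsymbol\xi = \hat{\B x} + t^*(\B x - \hat{\B x})$. Once this reduction is in place, it suffices to prove the scalar version: for a real-valued $g$ that is $(n+1)$-times differentiable on an interval containing two points $a$ and $b$,
\begin{equation*}
g(b) = \sum_{k=0}^{n} \frac{g^{(k)}(a)}{k!}(b-a)^{k} + \frac{g^{(n+1)}(\zeta)}{(n+1)!}(b-a)^{n+1}
\end{equation*}
for some $\zeta$ strictly between $a$ and $b$.

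To prove the scalar version I would introduce the auxiliary function
\begin{equation*}
F(t) = g(b) - \sum_{k=0}^{n} \frac{g^{(k)}(t)}{k!}(b-t)^{k} - K\,\frac{(b-t)^{n+1}}{(n+1)!},
\end{equation*}
where $K$ is the unique constant chosen so that $F(a) = 0$. Since $F(b) = 0$ holds automatically (every term with a positive power of $(b-t)$ vanishes), and $F$ is continuous on the closed interval with endpoints $a,b$ and differentiable on the open interval, Rolle's theorem supplies a point $\zeta$ strictly between $a$ and $b$ with $F'(\zeta) = 0$. The key computation is then to differentiate $F$: product-rule expansion of each summand produces two contributions, and adjacent indices $k$ and $k+1$ are built precisely so that these contributions telescope, leaving only the final term $-\frac{g^{(n+1)}(t)}{n!}(b-t)^{n}$ together with $K\frac{(b-t)^{n}}{n!}$. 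Setting $F'(\zeta) = 0$ yields $K = g^{(n+1)}(\zeta)$, and substituting this back into the defining relation $F(a) = 0$ delivers the desired identity.

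The main obstacle I expect is not any single step but rather the telescoping bookkeeping in the derivative of $F$: one has to carefully track that the derivative of the $k$-th summand produces a piece that exactly cancels the derivative piece coming from the $(k+1)$-st summand, with only the boundary terms at $k=0$ and $k=n$ surviving. A secondary, more conceptual hurdle is the multivariate reduction: one must justify that $g$ inherits $(n+1)$-times differentiability from $f$ along the line segment $\hat{\B x}+t(\B x-\hat{\B x})$, and interpret the resulting remainder $g^{(n+1)}(t^*)/(n+1)!$ correctly as the symmetric $(n+1)$-linear form of $f$ evaluated along $(\B x-\hat{\B x})$ to match the notation $(\B x-\hat{\B x})^{n+1}$ used in the statement. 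Once these are handled, assembling the final expression is routine.
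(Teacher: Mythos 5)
The paper does not actually prove this statement: it is quoted as a classical theorem in the Methods section and simply cited (incorrectly, to Buckingham's 1914 paper rather than to a standard analysis reference), so there is no in-paper argument to compare yours against. Judged on its own, your outline is the standard and correct proof: reduction to one variable via $g(t)=f(\hat{\B x}+t(\B x-\hat{\B x}))$, then the auxiliary function $F(t)=g(b)-\sum_{k=0}^{n}\frac{g^{(k)}(t)}{k!}(b-t)^{k}-K\frac{(b-t)^{n+1}}{(n+1)!}$ with $K$ fixed by $F(a)=0$, Rolle's theorem, and the telescoping computation $F'(t)=-\frac{g^{(n+1)}(t)}{n!}(b-t)^{n}+K\frac{(b-t)^{n}}{n!}$, which indeed yields $K=g^{(n+1)}(\zeta)$. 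One point worth making explicit: the theorem as stated in the paper assumes $f$ is $(n+1)$ times differentiable \emph{at the single point} $\hat{\B x}$, which is not enough for the Lagrange form of the remainder (that hypothesis only supports the Peano form); your scalar lemma correctly strengthens this to differentiability on an interval containing $a$ and $b$, and in the multivariate reduction you would likewise need $f$ to be $(n+1)$ times differentiable on a neighborhood of the segment joining $\hat{\B x}$ and $\B x$. So your proof is sound, but it silently repairs a hypothesis that the paper's statement gets wrong; flagging that repair is the only substantive addition I would ask for.
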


\subsection*{Two-Stage Composite Function} 

Given a dataset $(\Omega_x, \Omega_y)$, where $\Omega_x\in \mathbb{R}^{b\times p}$ is the input composed of variables $\B x\in \mathbb{R}^{p}$, $\Omega_y\in \mathbb{R}^{b\times 1}$ is the output composed of $y\in \mathbb{R}$, we aim to discover the underlying natural laws represented by a mapping $y=f(\B x)$. 
In favor of a concise and elegant equation, here we propose to decompose $f(\B x)$ into a two-stage explicit composite function, consisting of a latent layer $\B z=f_1(\B x)\in \mathbb{R}^s$ and an expression layer $y=f_2(\B z)$, with the scheme shown in Fig.~\ref{framework}a.

For the latent layer, drawing inspiration from Buckingham $\Pi$ theorem, we have the following corollary.
\begin{corollary}
Assuming that a physical quantity $y$ can be expressed as a function of other physical quantities $x_1,\ldots,x_p$, involving $d$ fundamental physical quantities, we have $y=f(z_1,\ldots,z_{p-d+1})$ with $z_i=\prod_{j=1}^{p} {x_j^{w_{ij}}}$.
\label{coro}
\end{corollary}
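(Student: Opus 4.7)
The plan is to derive Corollary 1 directly from Theorem 1 (the Buckingham $\Pi$ Theorem) by folding the dimensional prefactor into the list of power-product latent variables. Taking Equation~\ref{pi} as the starting point, I would note that both the $p-d$ dimensionless groups $\Pi_1,\ldots,\Pi_{p-d}$ and the single prefactor $\prod_{j=1}^d x_j^{q_j}$ are themselves power products of $x_1,\ldots,x_p$. Hence the entire right-hand side of Equation~\ref{pi} is built from $p-d+1$ power products, and the question reduces to exhibiting an explicit relabeling.

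Next, I would set $z_i := \Pi_i$ for $i=1,\ldots,p-d$ and $z_{p-d+1} := \prod_{j=1}^d x_j^{q_j}$. Reading off the exponents from Theorem 1, each $z_i$ with $i\leq p-d$ has the form $\prod_{j=1}^p x_j^{w_{ij}}$ with $w_{ij}=-q_{ij}$ for $j\leq d$, $w_{i,d+i}=1$, and all other $w_{ij}=0$; while $z_{p-d+1}$ has $w_{p-d+1,j}=q_j$ for $j\leq d$ and $w_{p-d+1,j}=0$ for $j>d$. With this relabeling Equation~\ref{pi} reads
\begin{equation*}
y \;=\; f_1(z_1,\ldots,z_{p-d}) \cdot z_{p-d+1},
\end{equation*}
so defining $f(z_1,\ldots,z_{p-d+1}) := f_1(z_1,\ldots,z_{p-d})\cdot z_{p-d+1}$ produces exactly the expression $y=f(z_1,\ldots,z_{p-d+1})$ with each $z_i=\prod_{j=1}^p x_j^{w_{ij}}$ as claimed.

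I do not expect a serious obstacle; the corollary is a bookkeeping rewrite of the Buckingham statement that promotes the dimensional carrier $\prod_{j=1}^d x_j^{q_j}$ to the role of an additional latent variable. The only points worth verifying carefully are (i) that the count $p-d+1$ is correct, which follows because Theorem 1 supplies $p-d$ dimensionless groups plus one dimensional prefactor, and (ii) that the prefactor is genuinely a power product in the same alphabet $x_1,\ldots,x_p$, which is immediate since it uses only $x_1,\ldots,x_d$ with zero exponents assigned to $x_{d+1},\ldots,x_p$. No further machinery beyond Theorem 1 is required, and the resulting form is precisely the one used throughout the paper to define the latent layer $f_1$.
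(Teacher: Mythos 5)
Your proposal is correct and follows essentially the same route as the paper's own (one-line) proof: both observe that the $p-d$ dimensionless groups $\Pi_i$ and the prefactor $\prod_{j=1}^d x_j^{q_j}$ in Eq.~(\ref{pi}) are all power products of $x_1,\ldots,x_p$, giving $p-d+1$ latent variables in total. Your version merely makes the relabeling and the explicit exponents $w_{ij}$ concrete, which the paper leaves implicit.
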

\begin{proof}
In Eq.~(\ref{pi}), both $\Pi_i (i=1,\ldots,p-d)$ and $\prod_{j=1}^dx_j^{q_j}$ take the form $\prod_{j=1}^{p}{x_j^{w_{ij}}}$, with a total of $p-d+1$ terms.
\end{proof}
\noindent According to this corollary, if there are $d$ fundamental physical quantities, the relationship between $y$ and $x_1, \ldots, x_p$ can involve at most $p - d + 1$ latent variables in the form of power products. 
Based on this, we define the latent variables as
\begin{equation}
    z_i=\prod_{j=1}^{p}x_j^{w_{ij}}, i=1,\ldots,s
\label{latent}
\end{equation}
with $\B W = <w_{ij}> \in \mathbb{R}^{s\times p}$ being the power matrix and $s\leq p-d+1$, and the latent layer $f_1$ transforms the input $\B x$ into $\B z$, achieving dimensionality reduction and making meaningful combinations of inputs. 

Regarding the expression layer, Taylor's theorem tells that most functions can be expanded into a Taylor series, allowing us to approximate them using polynomials. Therefore, $f_2$ can be expressed as 
\begin{equation}
    y=a_0+\sum_{i=1}^s a_iz_i+\sum_{i_1,i_2=1}^s a_{i_1i_2}z_{i_1}z_{i_2}+\ldots,
\end{equation}
serving as a unified proxy of the target $f_2(\cdot)$. 
If the resulting expression is overly complex, after identifying the optimal latent variable, we can perform symbolic regression to convert it into a more compact symbolic expression that well reflects the physical meaning of diverse physical formulas.


\subsection*{Structure Identification} 

To reduce the search space, we first analyze the dataset to extract a small number of contributing inputs, estimate the number of latent variables, and identify the input-to-latent connections, as shown in Fig.~\ref{framework}b. 

\vspace{2mm}
\noindent\textbf{Identification of Contributing Inputs.\quad} In this study, we employ SHAP (Shapley Additive Explanations)\cite{lundberg2017unified} to quantify the contribution of each input variable to the output. By ranking these contributions, we can pinpoint the key variables and discard any that are irrelevant.
To overcome the limitation of the Buckingham $\Pi$ formulation, which only allows for identifying latent variables in a multiplicative power-product form (like $z_1=x_1^ax_2^b$), we expand the input space by generating new candidates through unary operations $\{\text{cos}, \text{sin}, \text{tan}, \text{exp}, \text{abs}, \ldots\}$ and binary operations $\{+,-,\times,/\}$. We then apply SHAP to this enriched feature set to highlight the most impactful combinations as refined inputs. This approach helps capture the nonlinear interactions beyond the multiplicative power-product form.

\vspace{2mm}
\noindent\textbf{Identification of Topological Connections between Input and Latent Variables.\quad} 
From Eq.~(\ref{latent0}), we can get
\begin{equation}
\begin{split}
    &\f{\p z_i}{\p x_j}=w_{ij}x_j^{w_{ij}-1}\prod_{k\neq j}x_k^{w_{ik}}
    =w_{ij}\f{z_i}{x_j}, \\
    &\f{\p y}{\p x_j}=\sum_{i=1}^{s}\f{\p z_i}{\p x_j}\f{\p y}{\p z_i}
    =\sum_{i=1}^{s}w_{ij}\f{z_i}{x_j}\f{\p y}{\p z_i},
\end{split}
\label{deri}
\end{equation}
and if $x_j$ and $x_k$ only connect to $z_i$, we have
\begin{equation}
    \f{\p y}{\p x_j}/\f{\p y}{\p x_k}=(w_{ij}\f{z_i}{x_j}\f{\p y}{\p z_i})/
    (w_{ik}\f{z_i}{x_k}\f{\p y}{\p z_i})= \f{w_{ij}x_k}{w_{ik}x_j}.
\end{equation}
We further define $\rho_j=x_j\f{\p y}{\p x_j}$ as an attribute of $x_j$, leading us to conclude that the ratio $\f{\rho_j}{\rho_k}$ is a constant, i.e.,
\begin{equation}
    \f{\rho_j}{\rho_k}=\f{x_j\f{\p y}{\p x_j}}{x_k\f{\p y}{\p x_k}}=\f{w_{ij}}{w_{ik}}.
\label{rho}
\end{equation}
Based on the above equations, when $\rho_j$ and $\rho_k$ at different data points exhibit a linear relationship $\rho_j=c\rho_k$, we can infer that $x_j$ and $x_k$ are likely connected to the same latent variable $z_i$ and $w_{ij}/w_{ik}\approx c$. 

To estimate the $\boldsymbol\rho$ value, we can either use the backward difference method or apply first-order differentiation to the polynomial fitting the adjacent points. 
From the calculated $\boldsymbol\rho$, we use the PPMCC to evaluate the linear correlation between $\rho_j$ to $\rho_k$ ($j,k\in[1,p]$) and yield a $p\times p$ correlation matrix, from which we can easily identify the relevant inputs, the number of latent variables, and the connection relationships between them. 
Finally, we employ the least squares method to estimate the slope between $\rho_j$ and $\rho_k$, which allows us to estimate the value of $\f{w_{ij}}{w_{ik}}$.




\subsection*{Parameter Constraints}

\vspace{2mm}
\noindent\textbf{Dimensional Invariance.\quad}
To ensure that the resulting equation has explicit physical meaning, it is necessary to ensure $f_2\circ f_1(\B x)$ and $y$ have consistent unit, i.e.,
\begin{equation}
    f_2(\B D\B W^T)=\B d,
\end{equation}
where $\B D\in \mathbb{R}^{d\times p}$ is the dimension matrix of $\B x$ and $\B d\in \mathbb{R}^{d}$ is the dimension matrix of $y$. As shown in Fig.~\ref{framework}c, 
the dimension matrix represents the exponents of physical quantities concerning the fundamental dimensions in the natural world---mass [M], length [L], time [T], temperature [$\Theta$], electric current [I], luminous intensity [J], and amount of substance [N]. 

From Eq.~(\ref{latent}), the $i$-th latent variable $z_i$ is determined by the $i$-th row of $\B W$, i.e., $\B W_i$. When the output $y$ has no unit, i.e., $\B d=\B 0$, we set $\B D\B W_i^T=\B 0$ to ensure $z_1,\ldots,z_s$ are dimensionless numbers. When $y$ has a unit, i.e., $\B d\neq\B 0$, we set $\B D\B W_i^T= \B d$ to ensure unit consistency between the latent variables and the output. Incorporating the above two cases, we have the following constraints on the weight
\begin{equation}
    \B D\B W_i^T=\B d,
\end{equation}
whose closed-form solution is 
\begin{equation}
    \B W_i^T=\sum_{k=1}^{p-r(\B D)}\boldsymbol\lambda_{ik}\B E_{k}+\B e^*=\B E\boldsymbol\lambda_{i}^T+\B e^*,
    \label{lambda}
\end{equation}
where $\{\B E_k\}$ is the set of homogeneous solutions that satisfies $\B D \B E_k=0$, $\B E=[\B E_1,\ldots,\B E_{p-r(D)}]$, and $\B e^*$ is a particular solution to  $\B D \B e^*=\B d$. Once we have obtained $\B E$ and $\B e^*$, the task of searching for $\B W\in\mathbb{R}^{s\times p}$ turns into searching for $\boldsymbol\lambda\in\mathbb{R}^{s\times (p-r(\B D))}$, with $r(\cdot)$ denoting the rank of a matrix.

In most cases, each latent variable is only a partial combination of input variables. After structure identification, we can determine the inputs related to $z_i$, i.e., we know the indices of zero and non-zero entries in $\B W_i$ and denote them as $\xi_i$ and $\tau_i$, respectively:
\begin{equation}
\left\{
\begin{array}{cc}
    \B W_i^T[\xi_i]=\B E[\xi_i]\boldsymbol\lambda_{i}^T+\B e^*[\xi_i]=0, & \\
    \B W_i^T[\tau_i]=\B E[\tau_i]\boldsymbol\lambda_{i}^T+\B e^*[\tau_i]\neq0. & 
\end{array}
\right.
\end{equation}
The closed-form solution of this equation is
\begin{equation}
\begin{array}{ll}
   &\boldsymbol\lambda_{i}^T=\B F_i\boldsymbol\mu_{i}^T+\B f_i^*, \\
   &s.t.\quad \B E[\tau_i]\B F_i\boldsymbol\mu_{i}^T+\B E[\tau_i]\B f_i^*+\B e^*[\tau_i]\neq0. 
\end{array}
\end{equation}
where $\B F_i\in\mathbb{R}^{(p-r(\B D))\times (p-r(\B D)-r(\B E[\xi_i]))}$ is the homogeneous solutions that satisfies $\B E[\xi_i] \B F_i=0$, and $\B f_i^*$ is a particular solution to  $\B E[\xi_i] \B f_i^*=-\B e^*[\xi_i]$. Once we have obtained $\B F_i$ and $\B f_i^*$, the task of searching for $\boldsymbol\lambda$ turns into searching for $\{\boldsymbol\mu_i\}$.

Here we propose two strategies for imposing the DI constraints: DI-1 incorporates dimensional constraints during the search process, enforcing consistent units between latent variables and the output. This approach significantly reduces the search space. DI-2 performs dimension-agnostic searching by setting $\B D=\B 0, \B d=\B 0$. In this case, $\B E=\B I, \B e^*=\B 0, \B W=\boldsymbol\lambda$. Since the units of latent variables and the output in the resulting equations might differ, we subsequently assign appropriate units to the learned constant coefficients (i.e., $\{a_i\}$ in $f_2$) to compensate the inconsistency.
Both approaches have their respective advantages and disadvantages. DI-1 features a smaller search space, faster computation, and strict dimensional consistency between the latent variables and the output, providing better physical interpretability. 
In contrast, DI-2 has a larger search space but offers greater flexibility by allowing assigning units to the constant coefficients in $f_2$. For example, in the universal gravitation formula $F=\frac{G m_1 m_2}{r^2}$, $G$ is a constant with units, which can be discovered through the DI-2 framework.

\vspace{2mm}
\noindent\textbf{Dataset Constraints.\quad} The form of the latent variables provides valuable constraints on the function parameters, which can help narrow down the search space even further.
From Eq.~(\ref{latent}), there exists $x_j^{w_{ij}}$ term in $z_i$, so (i) if $\exists x_j<0$ in the dataset $(\Omega_x, \Omega_y)$, we let $w_{ij}\in\mathbb{Z}$ to avoid the occurrence of imaginary numbers; (ii) if $\exists x_j=0$ in the dataset, we force $w_{ij}\geq 0$ to avoid a zero divisor.
\begin{equation}
    \left\{
    \begin{array}{cc}
        w_{ij}\in\mathbb{Z}, & \text{if } \exists x_j<0\\
        w_{ij}\geq 0. & \text{if } \exists x_j=0
    \end{array}
    \right.
\end{equation}

\vspace{2mm}
\noindent\textbf{Equivalence Constraints.\quad} 
We can eliminate searching for coefficients equivalent to those that have already been optimized, which happens a lot when multiple latent variables are involved. For instance, we avoid using a repetitive latent variable, which makes optimizing the corresponding coefficients unnecessary. Additionally, the order of the latent variables does not impact the outcome, e.g., the expressions  $z_1=x_1^{1}x_2^{2}x_3^{3}, z_2=x_1^{4}x_2^{5}x_3^{6}$ is equivalent to $z_1=x_1^{4}x_2^{5}x_3^{6}, z_2=x_1^{1}x_2^{2}x_3^{3}$, so the optimization of any coefficient matrices generated by swapping two rows of a previously optimized matrix \( \B W \) are unnecessary.

\vspace{2mm}
\noindent\textbf{Sparsity Constraints.\quad} In most cases, each latent variable is only composed of a partial combination of input variables, i.e., $\B W$ is a sparse matrix represented as
\begin{equation}
    w_{ij}\left\{
    \begin{array}{cc}
    =0, & x_j\nrightarrow z_i\\
    \neq 0. & x_j\rightarrow z_i.
    \end{array}
    \right.
\end{equation}
To find concise and meaningful input combinations, we impose restrictions on the sparsity of the data. Specifically, we force $\B W$ to have at most $\kappa_1$ non-zero values and each column has no more than $\kappa_2$ non-zero entries, i.e., each input is associated with at most $\kappa_2$ latent variables:
\begin{equation}
\begin{split}
\left\{
\begin{array}{l}
    \Vert\{w_{ij}|w_{ij}\neq 0,i=1,\ldots,s,j=1,\ldots,p\}\Vert\leq\kappa_1, \\
    \Vert\{w_{ij}|w_{ij}\neq 0,i=1,\ldots,s\}\Vert\leq\kappa_2,~j=1,\ldots,p.
\end{array}
\right.
\end{split}
\end{equation}
It is important to highlight that the constraint is optional. You can choose to apply it when you have enough prior knowledge about $\kappa_1$ and $\kappa_2$.

\vspace{2mm}
\noindent\textbf{Sign Encoding.\quad} 
In the C2F process, the newly acquired coefficients from each refinement might have already been explored. To streamline our searches and avoid any unnecessary duplication, we assign codes to these coefficients based on their signs and leverage Python's dictionary capabilities to group the coefficients sharing the same sign code into distinct categories. This way, when we look for new coefficients, we only need to check them against those in their respective sign code groups to see if they've been searched before.

\vspace{2mm}
\noindent\textbf{Integration of the Constraints.\quad}
The five strategies outlined above are quite straightforward, yet they pack a powerful punch in terms of efficiency. When dealing with large-scale input or latent variables, these approaches can shrink the search space from hundreds of millions down to just a few dozen, which significantly reduces resource consumption. To ensure we do not overlook the true solution, we typically refrain from using the sparse constraint. However, all other strategies are activated by default and do not hinder our pursuit of the optimal solution.

\subsection*{Parameter Optimization}

\vspace{2mm}
\noindent\textbf{C2F Grid Search.\quad} 
Given an estimated version of $\boldsymbol\mu$, denoted as $\hat{\boldsymbol\mu}$, the mapping towards the latent variable can be estimated as
\begin{equation}    \hat{\Omega_z}=f_1(\Omega_x|\hat{\boldsymbol\mu}).
\label{C2F1}
\end{equation}
Then we minimize the least squares error to perform polynomial regression on $\hat{\Omega_z}$ and $\Omega_y$, and obtain $f_2$’s estimation $f_2(\cdot|\hat{\boldsymbol\mu})$ , i.e., 
\begin{equation}
    f_2(\cdot|\hat{\boldsymbol\mu})=\underset{f_2}{\arg\min}\lVert\Omega_y-f_2(\hat{\Omega_z})\rVert_2.
\label{C2F2_f2}
\end{equation}
In other words, the polynomial coefficients of $f_2$ can be calculated easily from $\hat{\boldsymbol\mu}$, and the optimization of the coefficients in the target function turns into the search for the optimal $\hat{\boldsymbol\mu}$.

Since the output $\hat{\Omega_y}$ can be predicted from the given $\hat{\boldsymbol\mu}$ as 
\begin{equation}
\hat{\Omega_y}=f_2(\hat{\Omega_z}|\hat{\boldsymbol\mu}),
\label{C2F2_output}
\end{equation}
we use the coefficient of determination
\begin{equation}
    R^2=1-\f{\sum_{i=1}^{b}({\Omega_y}_i-\hat{{\Omega_y}_i})^2}{\sum_{i=1}^{b}({\Omega_y}_i-\bar{\Omega_y})^2}
\label{C2F3}
\end{equation}
to measure the data fitting accuracy, with $\bar{\Omega_y}=(\sum_{i=1}^{b}{\Omega_y}_i)/b$ being the mean of $\Omega_y$. Naturally, $\boldsymbol\mu$ can be estimated by the following optimization
\begin{equation}
    \dot{\boldsymbol\mu}=\underset{\hat{\boldsymbol\mu}}{\arg\min}~R^2(\Omega_y,f_2(\hat{\Omega_z}|\hat{\boldsymbol\mu})).
\label{C2F4}
\end{equation}
 
In most cases, people prefer latent variables generated by input combinations with small exponents, such as the law of universal gravitation \( F = G m_1 m_2 r^{-2} \) and Kepler's third law \( T = k r^{1.5} \), so we limit \( \boldsymbol{\mu}_i \) to the range \([-2,2]^{p - r(\mathbf{D}) - r(\mathbf{E}[\xi_i])}\), and there are \( c = \sum_{i=1}^{s} \left( p - r(\mathbf{D}) - r(\mathbf{E}[\xi_i]) \right) \) parameters to be searched.
There are several typical searching strategies: gradient optimization algorithms tend to get stuck in local optima, which often leads to irregular decimal values such as 0.5234 instead of more concise ones like 0.5. On the other hand, linear searching can avoid the problem of local optima but faces challenges due to the large search space. 
To address these issues, we propose a C2F (Coarse-to-Fine) optimization framework that searches for the optimal solution in a systematic manner. As illustrated in Fig. \ref{framework}c, we begin with a rough initial value and progressively refine the solution within increasingly smaller ranges, moving closer to the optimal value.
\begin{itemize}[leftmargin=*]
    \item  \em{Initialization}. We first divide $[-2,2]^c$ with a step of 1.0 to obtain $5^c$ initial estimations of $\boldsymbol\mu$ and then exclude a lot of infeasible searching candidates with the constraints proposed before. For the rest estimations, one can select the top candidates based on their corresponding $R^2$ scores calculated from Eqns.~(\ref{C2F1},\ref{C2F2_f2},\ref{C2F2_output},\ref{C2F3},\ref{C2F4}). 
    \item  \em{Refinement}. We perform the next round of search with a step of 0.5 around the selected top candidates, calculate the $R^2$ distribution, and update them successively. Then we repeat this process progressively, with the step decreasing from 0.5 to 0.2 or even to 0.1.
\end{itemize}

The C2F search minimizes the risk of local optima and significantly narrows the search space. This strategy also offers flexibility regarding initialization settings and step size.
Moreover, our optimization strategy tends to overlook exponents of irregular decimals, potentially leading to results that better reflect the intrinsic simplicity of scientific laws.

\vspace{2mm}
\noindent\textbf{Multi-Level Optimization.\quad} 
The weight matrix $\B W$ contains $s \times p$ values. The initialization step in the above C2F grid search requires \( 5^{s \times p} \) searches, and this number will become prohibitively large when there are many latent variables.
To tackle this issue, we propose a progressive strategy to optimize the latent variables, as illustrated in Fig.~\ref{framework}c. Specifically, we begin by optimizing one variable, and then fix it and move on to optimize a new one. This process is repeated progressively to enhance the performance metric until we meet the predefined criteria.
This progressive optimization reduces the initialization search space from $5^{s \times p}$ to $s \times 5^p$, while experimentally maintaining comparable performance, as detailed in the Supplementary Material.

\subsection*{Expression Simplification}
As illustrated in Fig.~\ref{framework}d, upon determining the optimal latent variable and its corresponding polynomial expression, we proceed to further simplify the expression if the polynomial is excessively intricate. Taking the solution $y=sin(\f{x_1^2x_3^{1.5}}{\sqrt{x_2}})$ as an example, following the application of the structure identification and parameter optimization procedures, we derive $z=\f{x_1^2x_3^{1.5}}{\sqrt{x_2}}$ and $y=z-\f{z^3}{3!}+\f{z^5}{5!}$. To achieve a more compact representation, we utilize PySR\cite{cranmer2023interpretable} to conduct symbolic regression on the data points $\Omega_z$ and $\Omega_y$, which transforms the expression into a more succinct form $y = sin(z)$. 

Applying symbolic regression directly to Eq.~(\ref{C2F2_f2}) would lead to extremely long computation times for each test of $\hat{\boldsymbol\mu}$. In contrast, our approach significantly accelerates the process by starting with polynomial regression to achieve a temporary substitute for the original function efficiently and then applying symbolic regression to the resulting polynomial.

\vspace{4mm}
\noindent\textbf{Data Availability\quad}
All datasets used in this study are available on GitHub at \href{ https://github.com/HarryPotterXTX/FIND}{https://github.com/HarryPotterXTX/FIND}.

\vspace{2mm}
\noindent\textbf{Code Availability\quad}
All source codes used in this manuscript are available on GitHub at \href{ https://github.com/HarryPotterXTX/FIND}{https://github.com/HarryPotterXTX/FIND}.


\bibliography{ref}


\section*{Acknowledgements}
This work is jointly funded by National Natural Science Foundation of China (Grant No. 61931012) and the Beijing Municipal Natural Science Foundation (Grant No. L257009).

\section*{Author Contributions}
J. S., and T. X. conceived this project. J. S. supervised this research. T. X. designed the FIND architecture. T. X., X. S., Z. W., and B. Z. conducted the experiments and data analysis. All the authors participated in the writing of this paper.

\section*{Competing Interests}
The authors declare no competing financial interests.

\section*{Materials and Correspondence}
Correspondence and material requests should be addressed to Jinli Suo.

\end{multicols}

\end{document}